\documentclass[11pt,letterpaper]{article}
\usepackage[margin=1in]{geometry}
\usepackage{mathpazo}
\usepackage[T1]{fontenc}
\usepackage{fix-cm}   % scalable CM at all sizes
\usepackage{lmodern}  % Latin Modern — Type 1 outlines at all sizes (needed for \cmark nodes)
\usepackage{microtype}
\usepackage{amsmath,amssymb,amsthm}
\usepackage{listings}
\usepackage{xcolor}
\usepackage{hyperref}
\hypersetup{
  pdftitle={Nidus: Externalized Reasoning for AI-Assisted Engineering},
  pdfauthor={Danil Gorinevski},
  pdfsubject={AI governance, formal verification, LLM agents},
  pdfkeywords={constraint surface, proof obligations, SMT, S-expressions, multi-agent coordination, V-model, guidebooks},
  pdfcreator={LaTeX with hyperref},
  pdfproducer={pdfTeX},
}
\usepackage{tikz}
\usetikzlibrary{arrows.meta,positioning,shapes.geometric,automata,fit,backgrounds}
\usepackage{booktabs}
\usepackage{enumitem}
\usepackage{pgfplots}
\pgfplotsset{compat=1.18}
\usepackage{graphicx}
\usepackage[font=small,labelfont=bf,skip=6pt]{caption}
\usepackage{pifont}
\newcommand{\cmark}[1]{\tikz[baseline=(c.base)]{\node[circle,fill=black,text=white,inner sep=0pt,minimum size=1.1em,font=\usefont{T1}{lmss}{bx}{n}\tiny](c){#1};}}

\theoremstyle{definition}
\newtheorem{definition}{Definition}
\newtheorem{theorem}{Theorem}
\newtheorem{property}{Property}
\theoremstyle{remark}
\newtheorem*{remark}{Remark}

\title{% 
  \textbf{Nidus: Externalized Reasoning\\for AI-Assisted Engineering}
}
\author{
Danil Gorinevski\thanks{cybiont GmbH, Kantonsstrasse 17, 8862 Sch\"ubelbach, Switzerland.  ORCID: \href{https://orcid.org/0009-0007-5992-6667}{0009-0007-5992-6667}} \\
\texttt{research@cybiont.com}
}
\date{April 5, 2026}
  
\begin{document}
\maketitle
  
\begin{abstract}
We present Nidus, a governance runtime that mechanizes the V-model for AI-assisted software delivery. 
 In the self-hosting deployment, three LLM families (Claude, Gemini, Codex) delivered a 100{,}000-line system under proof obligations verified against the current obligation set on every commit.  The system governed its own construction.

Engineering invariants---traced requirements, justified architecture, evidenced deliveries---cannot be reliably maintained as learned behavior; assurance requires enforcement by a mechanism external to the proposer.  Nidus externalizes the engineering methodology into a decidable artifact verified on every mutation before persistence.  Organizational standards compile into \emph{guidebooks}---constraint libraries imported by governed projects and enforced by decidable evaluation.

Four contributions: (1)~\emph{recursive self-governance}---the constraint surface constrains mutations to itself; (2)~\emph{stigmergic coordination}---friction from the surface routes agents without central control; (3)~\emph{proximal spec reinforcement}---the living artifact externalizes the engineering context that RL and long-chain reasoning try to internalize; the specification is the reward function, UNSAT verdicts shape behavior at inference time, no weight updates; (4)~\emph{governance theater prevention}---compliance evidence cannot be fabricated within the modeled mutation path.  The constraint surface compounds: each obligation permanently eliminates a class of unengineered output.  The artifact's development history is a formal development---every state satisfies all active obligations, and the obligation set grows monotonically.
\end{abstract}

%% ====================================================================
\section{Introduction}
\label{sec:introduction}

Engineering is not code.  Engineering is the chain of invariants that makes code trustworthy: every requirement traced, every architecture decision justified, every delivery evidenced.  The V-model---require\-ments $\to$ architecture $\to$ design $\to$ verification $\to$ evidence---encodes these invariants.  It remains the standard in safety-critical domains (ISO~26262, DO-178C, IEC~62304).

LLM agents produce code with more defects~\cite{coderabbit2025} and no traceability.  The problem is not capability.  RLVR~\cite{lightman2023} and process reward models show that training can improve adherence to external verifiers---but the verifier remains the source of assurance, not the trained policy.  Engineering invariants cannot be reliably \emph{maintained} as learned behavior; they must be \emph{enforced} by a mechanism external to the proposer.  Training data captures what engineers did; it does not capture the rules they maintained.  A model that ``knows'' traceability from textbooks will approximate it, and the approximation degrades under pressure---an agent fabricates evidence (Section~\ref{sec:theater}), another skips synthesis gates because ``I optimize for speed over correctness.''  You do not train a compiler to reject type errors; you build a type checker.

Test-driven development (TDD) bridges code correctness.  Planning layers attempt to bridge the rest: agent frameworks generate implementation plans, add stateful workflows, add persistent state.  These are useful---but in each case the model proposes the plan and the model verifies the plan.  For safety-critical invariants, self-assessment is not sufficient; an external, decidable mechanism is required.

Nidus externalizes the methodology into a decidable artifact.  Requirements, architecture, design, traces, proofs, and evidence live in a single object---a \emph{living artifact}---verified on every mutation before persistence.  Organizational standards compile into \emph{guidebooks}---constraint libraries any project imports.  The methodology is in the environment, enforced by decidable evaluation; Z3 validates constraint composition at guidebook authoring time.

The architect provides what models cannot: judgment about \emph{what} to build and \emph{how} to structure it.  Requirements, architecture decisions, design elements, and the development vector are human input.  Three LLM families (Claude, Gemini, Codex) delivered the reference implementation through this governance loop.

The reference implementation uses S-expressions because they \emph{are} SMT-LIB2~\cite{smtlib2}: a solver-aligned representation that minimizes translation between the artifact and the solver~\cite{demoura2008}.  The tools for decidable verification---S-expressions~\cite{mccarthy1960}, SMT solvers, decidable logics---have existed for decades.  What changed is the rate of the proposer.

We call the union of proof obligations, guidebook constraints, and monotonic invariants the \emph{constraint surface} (formally defined in Section~\ref{sec:kernel}).  Four contributions: (1)~\emph{recursive self-governance}---the surface constrains mutations to itself (Section~\ref{sec:self-governance}); (2)~\emph{stigmergic coordination}---friction from the surface routes agents without an orchestrator (Section~\ref{sec:stigmergy}); (3)~\emph{proximal spec reinforcement}---the artifact externalizes what RL and planning layers try to internalize (Section~\ref{sec:psrl}); (4)~\emph{enforcement mechanisms}---governance theater prevention, SOP gate sequences, lesson provenance (Section~\ref{sec:theater}).

%% ====================================================================
\section{Nidus: A Decidable Living Specification}
\label{sec:language}

What does the living artifact look like in practice?  Nidus replaces codebases with a \emph{living specification}: the S-expression \emph{is} the design, the proof obligation set, and the governance authority.  Implementation code, RTL, and documentation are governed through anchor paths (\texttt{code-paths}, \texttt{test-paths}) and proof obligations.

\subsection{Representational Closure}
\label{sec:repr-closure}

The living artifact serves four roles simultaneously: the \textbf{database} (for the daemon that enforces mutations), the \textbf{solver input} (for Z3 that checks proof obligations), the \textbf{agent context} (an LLM ingests the entire artifact in one read), and the \textbf{specification} (a human reads the same object the solver checks).  We call this \textbf{representational closure}~\cite{mccarthy1960}: every participant---human, LLM, solver---reads, writes, and reasons over the identical object.

In traditional engineering, this state is fragmented: requirements in Jira, architecture in Confluence, traces in spreadsheets, test results in CI logs.  The living artifact unifies them---primarily for the AI agents, which need the full engineering state in their context window.  An agent that cannot see the traces cannot maintain them; an agent that cannot see the proof obligations cannot satisfy them.  The invariant chain becomes a structural property of the object, not a behavioral property of whoever writes to it.

\subsection{Annotated Example}
\label{sec:example}

Circled numbers \cmark{1}--\cmark{12} mark every structural element; Section~\ref{sec:formal} maps each to a mathematical object.

\begin{lstlisting}[caption={Annotated Nidus artifact.},escapechar=@]
(nidus-system "Interceptor"                     @\cmark{1}@
  (requirements                                  @\cmark{2}@
    (req UR-01 (kind sovereignty) (source RAD)
      (shall "All classification executes locally")
      (constraint (never (payload-egress data))))
    (req FR-01 (kind functional) (source DDD)
      (shall "Verdict is one of ALLOW BLOCK WARN")))
  (architecture                                  @\cmark{3}@
    (component Interceptor
      (responsibility "Intercept and route traffic"))
    (component Brain
      (responsibility "Local classification"))
    (connector Interceptor Brain               @\cmark{4}@
      (flow CLASSIFY)
      (protocol synchronous)))
  (design                                        @\cmark{5}@
    (value-object Verdict (values ALLOW BLOCK WARN)))
  (workflows                                     @\cmark{6}@
    (workflow classify-flow
      (states idle classifying decided)
      (initial idle)
      (transition idle classifying
        (guard (> priority 0)))
      (transition classifying decided
        (guard (< elapsed 100)))))
  (features                                      @\cmark{7}@
    (feature FEAT-01
      (name "Local classification engine")
      (status claimed)
      (scope (requirements UR-01 FR-01)
        (code-paths "src/brain.py"))))
  (traceability                                  @\cmark{8}@
    (trace TR-01 UR-01 Brain Verdict)
    (trace TR-02 FR-01 Brain Verdict))
  (proof-obligations                             @\cmark{9}@
    (proof PO-TRACE-01
      (kind traceability-complete)
      (description "Every requirement has a trace"))
    (proof PO-CONN-01
      (kind connector-integrity)
      (description "Connectors reference components"))
    (proof PO-DAG-01
      (kind dag-enforcement)
      (description "No dependency cycles"))
    (proof PO-WF-01
      (kind workflow-satisfiability)
      (description "Workflow guards are satisfiable")))
  (coordination                                  @\cmark{10}@
    (claim (agent opus-a1b2) (feature FEAT-01)
      (lease-expires "2026-03-20T14:00:00Z")))
  (guidebooks                                    @\cmark{11}@
    (imports "../epoch.guidebook.epoch")))

;; Sidecar file: .epoch.friction.epoch          @\cmark{12}@
(agent-friction-ledger
  (events
    (friction-event FX-01
      (kind agent_rejection)
      (agent gemini-8536)
      (timestamp "2026-03-09T21:42:03Z"))))
\end{lstlisting}

\noindent A \textbf{connector} \cmark{4} is a typed data-flow edge.  A \textbf{workflow} \cmark{6} is an FSM with QF\_LIA guards.  \textbf{Features} \cmark{7} carry scope: which requirements, code-paths, and test-paths belong to the delivery.  A \textbf{trace} \cmark{8} links requirement $\to$ component $\to$ design element.  \textbf{Proof obligations} \cmark{9} are predicates over the artifact's finite sections, evaluated on every mutation.  Each has a \texttt{kind} that names a \emph{verification category}---\texttt{traceability-complete}, \texttt{connector-integrity}, etc.---and binds to an evaluator function in the kernel.  Obligations span structural, schema, SMT, temporal, and domain-specific kinds.  Two layers of DAG enforcement illustrate the distinction: the local \texttt{PO-DAG-01} (kind \texttt{dag-enforcement}) checks that \emph{components} form an acyclic dependency graph, while the inherited \texttt{GC-MODULE-DAG} (kind \texttt{call-graph-dag}) checks that \emph{Lisp modules} in the daemon codebase have no circular imports.  Both are DAG checks, but over different sections of the artifact.

\emph{Example.}  An agent adds requirement \texttt{FR-02} without a corresponding trace.  \texttt{PO-TRACE-01} (\texttt{traceability-complete}) rejects: FR-02 has no trace link.  The agent adds \texttt{(trace TR-03 FR-02 Brain Verdict)}.  All POs pass; persisted.  The loop: propose, verify, repair, commit.

\textbf{Coordination} \cmark{10}: claims with leases.  \textbf{Friction ledger} \cmark{12}: append-only sidecar for agent-quality signals.

\textbf{Guidebook imports} \cmark{11}.  The Interceptor project defines four local proof obligations (PO-TRACE-01 through PO-WF-01).  But line~\cmark{11} imports an organizational guidebook.  That guidebook carries additional constraints---engineering standards the project inherits without defining them locally:

\begin{lstlisting}[caption={Guidebook constraints inherited via \texttt{(imports ...)} in Listing~1.},escapechar=@,basicstyle=\ttfamily\footnotesize]
;; From epoch.guidebook.epoch (imported by Interceptor)

(guidebook-constraint GC-SCOPE-COMPLETENESS       @\cmark{A}@
  (z3_formula (implies feature_delivered           @\cmark{B}@
               (and has_code_paths has_test_paths
                    has_requirements)))
  (po-kind feature-code-test-symmetry)             @\cmark{C}@
  (description "Delivered features must have
    code-paths, test-paths, and requirements"))

(guidebook-constraint GC-MODULE-DAG               @\cmark{A}@
  (z3_formula (and (not parse_imports_query)       @\cmark{B}@
                   (not mutate_imports_query)
                   (not verify_imports_query)))
  (po-kind call-graph-dag)                         @\cmark{C}@
  (description "Module DAG: no back-edges"))

(guidebook-constraint GC-EVIDENCE-PROVENANCE      @\cmark{A}@
  (z3_formula (implies evidence_submitted          @\cmark{B}@
               (and witness_registered hash_present)))
  (po-kind evidence-provenance)                    @\cmark{C}@
  (description "Evidence requires registered
    witness and server-computed hash"))
\end{lstlisting}

\noindent \cmark{A}~Constraint identity. \cmark{B}~SMT formula: compiled to native Z3 Boolean assertions (Section~\ref{sec:kernel} details the compilation). \cmark{C}~PO-kind binding: links to an evaluator function that iterates artifact sections and returns violations.

\emph{Terminology note: \texttt{kind} vs.\ \texttt{po-kind}.}  Inside the artifact (Listing~1), a proof obligation's \texttt{kind} names the verification category it belongs to---e.g.\ \texttt{traceability-complete}.  Inside a guidebook constraint (Listing~2), the \texttt{po-kind} field serves the same purpose: it identifies the evaluator function the kernel dispatches.  The two are the same namespace; the separate keyword exists because a guidebook constraint also carries a \texttt{z3\_formula}, whereas a local PO relies solely on its evaluator.

Each constraint has two enforcement paths.  The \texttt{po-kind} evaluator traverses the artifact's finite sections and returns violations or nil.  The \texttt{z3\_formula} compiles to a Z3 assertion checked for satisfiability.  Many constraints use both: the evaluator performs the structural check; the formula provides a declarative specification the solver verifies independently.

\emph{Why guidebooks matter.}  Without the import on line~\cmark{11}, the Interceptor's four local POs would accept a delivery with \texttt{code-paths} but no \texttt{test-paths}---traceability is complete, connectors are valid, the DAG is acyclic, and the workflow guards are satisfiable.  \emph{With} the import, the organizational guidebook adds \texttt{GC-SCOPE-COMPLETENESS}, which requires \texttt{test-paths} on every delivered feature.

Agent \texttt{opus-a1b2} attempts to deliver FEAT-01 (Listing~1, currently \texttt{claimed}).  The mutation changes status to \texttt{delivered}.  The daemon evaluates all obligations---local and inherited---against the proposed state:

\begin{lstlisting}[caption={Verification output: local POs pass, inherited constraint rejects.},basicstyle=\ttfamily\footnotesize]
commit_change_set: verify FEAT-01 delivery
  PO-TRACE-01  traceability-complete     PASS
  PO-CONN-01   connector-integrity       PASS
  PO-DAG-01    dag-enforcement           PASS
  PO-WF-01     workflow-satisfiability   PASS
  -- inherited from epoch.guidebook.epoch --
  GC-MODULE-DAG       call-graph-dag          PASS
  GC-EVIDENCE-PROV.   evidence-provenance     PASS
  GC-SCOPE-COMPLETE.  feature-code-test-sym.  FAIL
    FEAT-01: has code-paths, missing test-paths
    z3: (implies feature_delivered
          (and has_code_paths has_test_paths
               has_requirements))  -> UNSAT
REJECTED: 1 violation (GC-SCOPE-COMPLETENESS)
\end{lstlisting}

\noindent The agent adds \texttt{(test-paths "tests/test\_brain.py")} to the scope and retries; all obligations pass; FEAT-01 is delivered.  The organizational standard was enforced mechanically---the Interceptor project never defined this constraint.

Inheritance is monotonic (Definition~5): $\Pi(G_{\mathrm{parent}}) \subseteq \Pi(G_{\mathrm{child}})$.  Three layers compose: \emph{code practices} (DAG, function length, naming), \emph{engineering process} (scope completeness, TDD evidence, claim-before-work), and \emph{anti-theater} (evidence provenance, claim-before-dispatch).  Every governed project inherits all three---the methodology is a \emph{reusable proof library}.

\paragraph{Guidebooks as domain introduction.}  Guidebooks carry more than constraints.  A \texttt{(modules~\ldots)} declaration names domain-specific Lisp modules that the daemon loads on demand when any project imports the guidebook.  The chip-design guidebook declares four modules---an RTL parser, five hardware PO evaluators, a synthesis checker, and a simulation evidence gate---alongside eight constraints and onboarding templates.  A single \texttt{(imports "chip-design.guidebook.epoch")} line gives the verilog project a complete hardware governance stack: constraint enforcement, domain-specific verification, and tool integration.  The kernel does not change.  Adding a new domain (VHDL, Chisel, a compliance framework) means writing a guidebook and its modules---not modifying the daemon.  Each module must export a set of named evaluator functions; every evaluator accepts the artifact's parsed S-expression tree plus the mutated section, and returns either \texttt{nil} (pass) or a list of violation records.  The kernel validates this interface at load time.  This is the extensibility mechanism: guidebooks are not just constraint libraries; they are \emph{pluggable domain packages}.

\subsection{Formal Definition}
\label{sec:formal}

\begin{definition}[Nidus System \cmark{1}]
$\mathcal{S} = (R, A, D, W, \mathcal{F}, \mathcal{T}, \Pi, \Sigma, \Lambda, G)$ where:
$R$ \cmark{2} requirements, $A = (C, E)$ \cmark{3}\cmark{4} architecture (components + connectors), $D$ \cmark{5} design elements, $W$ \cmark{6} workflows (FSMs with QF\_LIA guards), $\mathcal{F}$ \cmark{7} features (scope + delivery status), $\mathcal{T}$ \cmark{8} trace links, $\Pi$ \cmark{9} proof obligations, $\Sigma = (K, L)$ \cmark{10}\cmark{12} coordination state (claims + friction ledger), $\Lambda$ lessons (failure $\to$ root-cause $\to$ obligation; see Section~\ref{sec:theater}), $G$ \cmark{11} guidebook hierarchy.
\end{definition}

\begin{definition}[Trace Link \cmark{8}]
$t = (r, c, d)$: requirement $r \in R$, component $c \in C$, design element $d \in D \cup W$.
\end{definition}

\begin{property}[Decidability and Complexity \cmark{6}\cmark{9}]
All sets are finite; workflow guards are QF\_LIA/QF\_IDL~\cite{smtlib2}.  Obligations decompose into three decidable categories:
\begin{enumerate}[nosep,label=(\alph*)]
  \item finite graph checks (traceability, orphans, DAG acyclicity): $O(|C|{+}|E|{+}|\mathcal{T}|)$,
  \item schema checks (field completeness): $O(|\mathcal{F}|)$,
  \item bounded arithmetic over workflow guards, decided by Z3 in time polynomial in the number of guard variables (fixed per workflow).
\end{enumerate}
Total verification cost per mutation: $O(|\Pi| \cdot |\mathcal{S}|)$ where $|\mathcal{S}|$ is the total number of entries across all artifact sections, linear in artifact size for each obligation.
\end{property}

\begin{definition}[Immutable Obligations]
A proof obligation $\pi$ with \texttt{(immutable t)} belongs to $\Pi^{\mathrm{imm}} \subseteq \Pi$.  Mutations that remove or weaken any $\pi \in \Pi^{\mathrm{imm}}$ are rejected.  Theorem~\ref{thm:correctness} depends on this: $\Pi_0^{\mathrm{imm}} \subseteq \Pi_n$ for all accepted histories.
\end{definition}

\begin{definition}[Trust Tier \cmark{12}]
$\textit{tier}(a, L) = \textit{unrestricted}$ if $\textit{fric}(a, L, w) < \theta_1$; $\textit{supervised}$ if $\theta_1 \leq \textit{fric} < \theta_2$; $\textit{restricted}$ if $\textit{fric} \geq \theta_2$.  Restricted agents cannot claim features.  Clean deliveries earn promotion.
\end{definition}

\begin{definition}[Guidebook Hierarchy \cmark{11}]
$G = (G_1, \ldots, G_m)$: constraint bundles imported transitively.  Each $G_i$ declares SMT formulas compiled to Z3 assertions.  Inheritance is monotonic: $\Pi(G_{\mathrm{parent}}) \subseteq \Pi(G_{\mathrm{child}})$.
\end{definition}

\begin{definition}[Governance Loop \cmark{9}\cmark{10}\cmark{12}]
\label{def:psrl}
(1)~Agent observes $\mathcal{S}$. (2)~Proposes mutation $m$. (3)~Kernel evaluates $m$ against all $\pi \in \Pi$: pass $\to$ persist + \texttt{probe\_success}; fail $\to$ reject + \texttt{agent\_rejection} + UNSAT core (the minimal subset of constraints that caused the failure).  (4)~Tier recomputed.
\end{definition}

\begin{figure}[htbp]
\centering
\includegraphics[width=\textwidth]{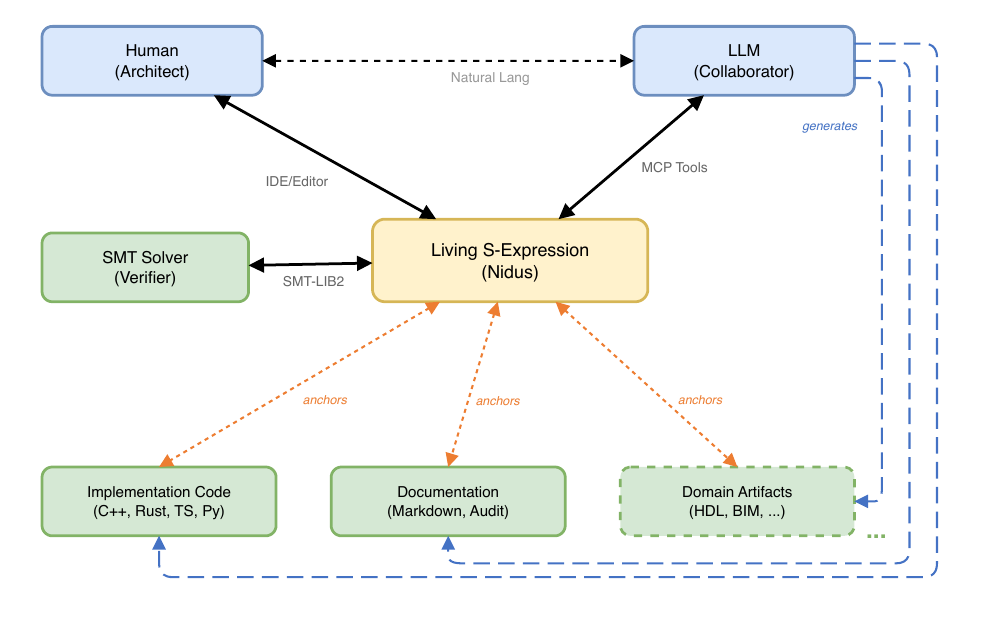}
\caption{Representational closure.  Human, LLM, and Solver read/write the same S-expression.}
\label{fig:quadrivium}
\end{figure}

%% ====================================================================
\section{The Verification Kernel}
\label{sec:kernel}

\paragraph{Guidebook constraint compilation.}  Guidebook constraints (Listing~2) have two enforcement paths.  A \texttt{po-kind} field binds to an evaluator function registered in the kernel's dispatch table---evaluators are not hardcoded; guidebook modules can register new evaluator functions at import time (Section~\ref{sec:language}), extending the kernel without modifying it; the evaluator traverses artifact sections and returns violations.  A \texttt{z3\_formula} field is compiled to a native Z3 assertion: propositional variables are declared as Z3 Booleans, operators map to their SMT-LIB2 equivalents, and the assertion is checked for satisfiability.  Internally inconsistent formulas (e.g.\ \texttt{(and p (not p))}) are flagged as UNSAT at import time.  The compilation is mechanical---the S-expression grammar is a strict subset of SMT-LIB2.  Propositional variables map to artifact predicates: \texttt{feature\_delivered} is true when any feature's status equals \texttt{delivered}; \texttt{has\_test\_paths} is true when its scope contains a \texttt{test-paths} field.  The evaluator computes these predicates from the artifact state before each Z3 call.

\paragraph{Obligation taxonomy.}  Proof obligations span seven categories: \emph{structural}, \emph{schema}, \emph{SMT}, \emph{temporal}, \emph{anti-theater}, \emph{coverage}, and \emph{domain} (including hardware-specific kinds: clock-domain crossing, port-width consistency).

Guards compile to SMT-LIB2 with minimal translation overhead; Z3~\cite{demoura2008} consumes them directly.  Graph checks are $O(|V|+|E|)$ where $V$ is the set of components and $E$ the set of connectors in the architecture; SMT queries over QF\_LIA with fixed variable counts.  Failed mutations return an UNSAT core; \texttt{verified\_repair\_search} returns only proven fixes.

\texttt{what\_if} applies a proposed mutation to an in-memory copy of the artifact, runs all proof obligations against the copy, and discards it---the agent receives a pass/fail verdict and any UNSAT core without risking state corruption or accumulating friction.  Agents use \texttt{what\_if} to explore the constraint landscape at zero risk.  Monotonic invariants ensure requirement count, PO count, and delivered status can never decrease.  The artifact is a single object---rather than a distributed collection of files---because representational closure (Section~\ref{sec:repr-closure}) requires that human, LLM, and solver all read the same structure; splitting it would reintroduce the fragmentation the design eliminates.  Git is the artifact's write-ahead log.

\paragraph{Verification before persistence.}  The practical consequence is that \emph{no other agent or tool ever observes a state that fails the active verification gate}.  In conventional CI/CD, a broken commit exists on the branch---other agents may pull it, build on it, or reference it before the pipeline reports a failure.  Nidus eliminates this window: the daemon verifies the mutation in memory and only then appends it to the git write-ahead log.  The repository contains exclusively states that have passed the active verification gate.  Theorem~\ref{thm:correctness} formalizes this: every persisted $\mathcal{S}_i$ satisfies $\Pi_i$.  Verification is sound with respect to the currently encoded obligation set and trusted evaluators, but does not guarantee properties that have not been modeled.  A proof obligation that runs the feature's test suite during delivery extends this guarantee to the code: nothing ships unless it compiles and passes.  Kleppmann~\cite{kleppmann2025} predicts that AI will make formal verification mainstream; Nidus is an instance.

\begin{figure}[htbp]
\centering
\includegraphics[width=\textwidth]{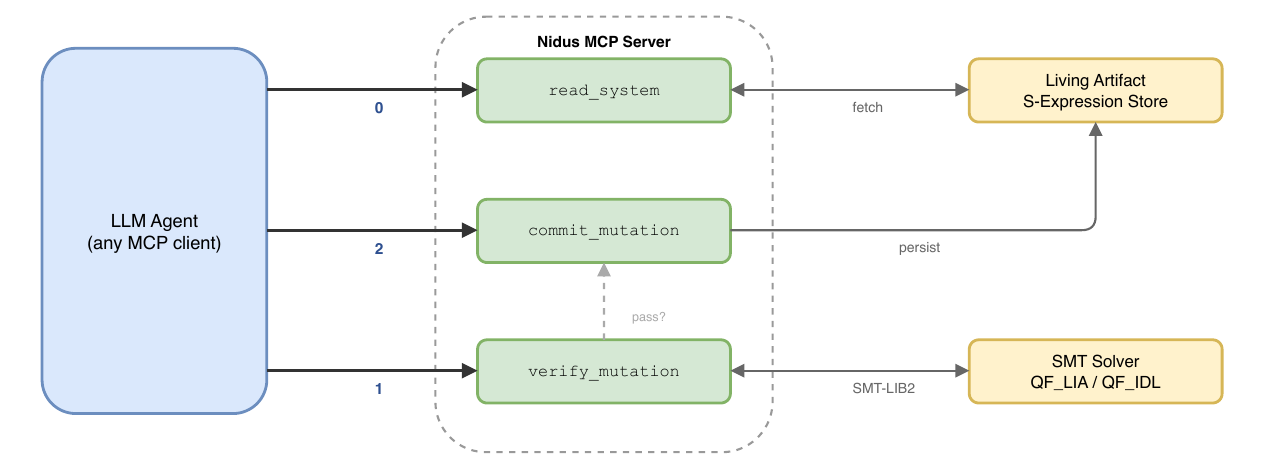}
\caption{MCP server: read (step~0), verify (step~1), commit (step~2).  Commit is gated on verification.}
\label{fig:mcp}
\end{figure}

\subsection{Formal Guarantees}

The following properties are straightforward by construction, but we state them explicitly because the rest of the paper refers to them as contracts.

\begin{theorem}[Decidable Verification]
For any $\mathcal{S}$ and mutation $m$, verification of $m$ against $\Pi$ terminates in bounded time.
\end{theorem}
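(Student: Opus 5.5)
The plan is to reduce the theorem to Property~1 (Decidability and Complexity) together with the finiteness of every component of $\mathcal{S}$ in Definition~1. Verification of $m$ has three phases: construct the proposed state, evaluate every $\pi \in \Pi$ on it, and aggregate the verdicts. I would bound each phase separately.

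\textbf{Step 1: the proposed state is finite.} Applying $m$ to $\mathcal{S}$ yields a proposed state $\mathcal{S}'$, exactly as \texttt{what\_if} does. Since $m$ is a finite S-expression edit, $\mathcal{S}'$ is again finite, with $|\mathcal{S}'| \le |\mathcal{S}| + |m|$. The obligation set $\Pi'$ to be checked is finite as well. It is the union of the local obligations and the constraints of the guidebook hierarchy $G$. The hierarchy $G = (G_1,\ldots,G_m)$ is a finite tuple, and each $G_i$ declares finitely many constraints, so transitive import closes after finitely many steps. I would make this explicit by treating imports as a finite graph and computing its closure with a visited set, so that a cyclic import still terminates.

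\textbf{Step 2: each obligation terminates.} I would split on the three categories of Property~1.
\begin{enumerate}[nosep,label=(\alph*)]
  \item Graph checks are linear traversals of the finite structures $C$, $E$, and $\mathcal{T}$.
  \item Schema checks are a single pass over $\mathcal{F}$.
  \item SMT checks are of two kinds. Workflow guards lie in QF\_LIA/QF\_IDL, which are decidable theories, and each involves a fixed, finite set of variables. The compiled \texttt{z3\_formula} assertions are purely propositional over finitely many Boolean variables. Each such variable is computed by a finite traversal of the artifact, and propositional satisfiability is decidable; brute force over $2^k$ assignments already gives a bound.
\end{enumerate}
Summing over the finitely many obligations gives a total bound of roughly $O(|\Pi'| \cdot |\mathcal{S}'|)$ plus the SMT terms. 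UNSAT-core extraction is also a bounded procedure on the same finite formula, so the rejection branch terminates too.

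\textbf{The main obstacle} is the extensibility mechanism. Guidebook modules register arbitrary evaluator functions, and a general Lisp function need not terminate, so the claim cannot hold unconditionally for them. I would handle this by making the theorem's scope explicit in the same way the paper scopes soundness to ``trusted evaluators.'' The assumption is that every registered evaluator falls into one of the decidable categories of Property~1, that is, it is a finite traversal of artifact sections. Failing that, the kernel enforces a per-evaluator time budget that counts as a rejection when exceeded, which makes termination hold by construction.

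A similar caveat covers solver calls. The practical implementation of Z3 is a decision procedure for QF\_LIA, so it halts in principle. I would cite decidability of QF\_LIA rather than any specific runtime bound, and note that the polynomial claim in Property~1(c) depends on the variable count being fixed.
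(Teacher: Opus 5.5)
Your proposal is correct and follows the paper's own argument: every set in Definition~1 is finite, graph checks are linear, the QF\_LIA guards are decidable, and $\Pi$ is finite. You also fill in details the one-line proof skips: the proposed state and the transitive guidebook import closure are finite, the \texttt{z3\_formula} checks are propositional, and UNSAT-core extraction is bounded. Your caveat about guidebook-registered evaluators is well taken: the paper's proof silently relies on Property~1's claim that every obligation falls into one of its three decidable categories, which pluggable Lisp evaluators (or a PO that runs a test suite) need not satisfy, so stating that assumption explicitly or enforcing a time budget is what makes the theorem hold.
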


\begin{proof}
All sets finite (Def.~1); graph checks $O(|V|+|E|)$; guards QF\_LIA (decidable); $\Pi$ finite.
\end{proof}

\begin{theorem}[Incremental Correctness]
\label{thm:correctness}
If $\mathcal{S}_0$ satisfies $\Pi_0$ and mutations $m_1, \ldots, m_n$ are accepted, then every $\mathcal{S}_i$ satisfies $\Pi_i$ and $\Pi_0^{\mathrm{imm}} \subseteq \Pi_n$.
\end{theorem}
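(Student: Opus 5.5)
The plan is induction on the length $n$ of the accepted history, proving two invariants together: (i) $\mathcal{S}_i \models \Pi_i$, and (ii) $\Pi_0^{\mathrm{imm}} \subseteq \Pi_i$. Both follow from how the Governance Loop (Definition~\ref{def:psrl}) accepts a mutation. Before starting, I fix the semantics of a step. A mutation $m_{i+1}$ maps the pair $(\mathcal{S}_i, \Pi_i)$ to a candidate $(\mathcal{S}_{i+1}, \Pi_{i+1})$, where $\Pi_{i+1}$ is the obligation set \emph{of the proposed state}, since $\Pi$ is itself a component of $\mathcal{S}$. The kernel persists this candidate only if every $\pi \in \Pi_{i+1}$ evaluates to pass on $\mathcal{S}_{i+1}$, and only if no $\pi \in \Pi_i^{\mathrm{imm}}$ is removed or weakened.

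The base case is the hypothesis $\mathcal{S}_0 \models \Pi_0$, together with the trivial inclusion $\Pi_0^{\mathrm{imm}} \subseteq \Pi_0$.

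For the inductive step, assume $m_{i+1}$ is accepted.
\begin{enumerate}[nosep]
\item The acceptance condition of step (3) gives $\mathcal{S}_{i+1} \models \Pi_{i+1}$ directly. Here I rely on the Decidable Verification theorem: each evaluation terminates with a definite verdict, so acceptance really means every evaluator returned pass.
\item Because acceptance was granted, the Immutable Obligations definition says no element of $\Pi_i^{\mathrm{imm}}$ was removed or weakened. Hence $\Pi_i^{\mathrm{imm}} \subseteq \Pi_{i+1}$.
\item The \texttt{(immutable t)} flag is itself an attribute of the obligation. Clearing it counts as weakening, so the flagged obligations carry over as $\Pi_i^{\mathrm{imm}} \subseteq \Pi_{i+1}^{\mathrm{imm}}$.
\item Chaining with the inductive hypothesis $\Pi_0^{\mathrm{imm}} \subseteq \Pi_i^{\mathrm{imm}}$ yields $\Pi_0^{\mathrm{imm}} \subseteq \Pi_{i+1}^{\mathrm{imm}} \subseteq \Pi_{i+1}$.
\end{enumerate}

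The main obstacle is step~3. The definition only forbids removal or weakening of immutable obligations. If demoting an obligation from immutable to mutable were not counted as weakening, an adversary could use two mutations: first demote $\pi$, then delete it. That would break the second conclusion over multi-step histories. I will therefore make explicit the reading that "weaken" includes removing the immutable flag, justified by the recursive self-governance claim that the surface constrains mutations to itself. If that reading is unavailable, I would strengthen the induction hypothesis to carry $\Pi_0^{\mathrm{imm}} \subseteq \Pi_i^{\mathrm{imm}}$ as a separate invariant and argue it from the kernel's rejection rule.

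A secondary subtlety is that guidebook-inherited constraints also belong to $\Pi_i$. The monotone inheritance $\Pi(G_{\mathrm{parent}}) \subseteq \Pi(G_{\mathrm{child}})$ ensures they are included in the set evaluated at step (3), so step~1 covers them as well.

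The guarantee is relative to the trusted evaluators, exactly as stated in the paper.
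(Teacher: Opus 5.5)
Your proof is correct and follows the paper's argument: induction on the accepted history, with the acceptance condition giving $\mathcal{S}_i \models \Pi_i$ and the immutability rule carrying $\Pi_0^{\mathrm{imm}}$ forward. Your extra invariant $\Pi_0^{\mathrm{imm}} \subseteq \Pi_i^{\mathrm{imm}}$, which treats clearing the \texttt{(immutable t)} flag as weakening, makes explicit a step that the paper's one-line proof (``Immutable POs cannot be removed'') leaves implicit, and the multi-step conclusion needs it.
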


\begin{proof}
Induction.  Accept only if $\mathcal{S}_i$ satisfies all $\pi \in \Pi_i$.  Immutable POs cannot be removed.
\end{proof}

\begin{theorem}[Monotonic Governance]
Under guidebook hierarchy $G$, accepted child mutations satisfy all parent obligations.
\end{theorem}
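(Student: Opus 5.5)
The plan is to combine two facts: the monotonic inheritance clause of Definition~5 (Guidebook Hierarchy), and the acceptance criterion behind Theorem~\ref{thm:correctness} (Incremental Correctness). The proof reduces to a set inclusion followed by an appeal to the kernel's acceptance rule.

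First, I fix a child project whose guidebook hierarchy is $G = (G_1,\ldots,G_m)$, and write $\Pi^{\mathrm{child}}_i$ for its effective obligation set at step $i$. This set consists of the local obligations together with every constraint imported transitively through $G$. For any parent $G_{\mathrm{parent}}$ in the hierarchy, Definition~5 gives $\Pi(G_{\mathrm{parent}}) \subseteq \Pi(G_{\mathrm{child}})$. Since imports are transitive and $\subseteq$ is transitive, this inclusion extends from direct parents to every ancestor. I would argue this by induction on import depth: a grandparent's obligations lie in the parent's set, and the parent's set lies in the child's. The conclusion is that every ancestor obligation is a member of $\Pi^{\mathrm{child}}_i$.

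Second, I take an accepted child mutation $m_{i+1}$ producing $\mathcal{S}_{i+1}$. By the Governance Loop (Definition~\ref{def:psrl}), step~(3) evaluates $m_{i+1}$ against \emph{all} $\pi \in \Pi^{\mathrm{child}}_{i+1}$ and persists only if every one passes. Theorem~\ref{thm:correctness} records exactly this: $\mathcal{S}_{i+1}$ satisfies $\Pi^{\mathrm{child}}_{i+1}$. Restricting to the subset $\Pi(G_{\mathrm{parent}})$ then yields the claim. Each inherited constraint is enforced by its \texttt{po-kind} evaluator and, where present, its compiled \texttt{z3\_formula}. Passing both paths is what "satisfies" means for such a constraint.

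The main obstacle is making sure the child cannot remove a parent obligation mid-history, for example by editing the \texttt{(imports ...)} line. Without that guarantee, the inclusion would hold only at a single step and could fail afterwards. I would first try to close this gap by treating the import declaration and the inherited constraints as elements of $\Pi^{\mathrm{imm}}$. Definition~4 then rejects any mutation that removes or weakens them, and Theorem~\ref{thm:correctness}'s clause $\Pi_0^{\mathrm{imm}} \subseteq \Pi_n$ gives persistence across the whole history. If that modeling assumption is not available, I would state the theorem relative to the currently active hierarchy at each step, which the argument above already establishes.
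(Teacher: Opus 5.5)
Your proposal is correct and follows the paper's proof. That proof consists of exactly your first two steps: the inclusion $\Pi(G_{\mathrm{parent}}) \subseteq \Pi(G_{\mathrm{child}})$ from Definition~5, plus the fact that child verification checks every obligation in the child's set. The ancestor induction and the idea of making the import line immutable are extras the paper does not need, because the theorem is stated for a fixed hierarchy $G$, which is your fallback reading.
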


\begin{proof}
$\Pi(G_{\mathrm{parent}}) \subseteq \Pi(G_{\mathrm{child}})$.  Child verification checks all.
\end{proof}

\begin{theorem}[Engineering Record Completeness]
\label{thm:record}
If the artifact history $H = (\mathcal{S}_0, m_1, \ldots, m_n, \mathcal{S}_n)$ satisfies Theorem~\ref{thm:correctness} and each accepted mutation carries an attestation (feature, agent, verification result, fingerprint transition), then compliance certificates, traceability matrices, and impact analyses are derivable as projections of~$H$ without external state.
\end{theorem}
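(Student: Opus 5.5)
The approach is constructive: for each of the three artifacts I exhibit a function of $H$ alone and check that it computes the intended object. First I fix what ``projection'' means. A projection is a total, deterministic function $f(H)$ that reads only the states $\mathcal{S}_i$ and the attestations on the $m_i$. It may not read anything outside $H$, such as CI logs, agent memory, or wall-clock state. Because every component of Definition~1 is finite (Property~1), each $\mathcal{S}_i$ is a finite structure. Each attestation is the finite tuple $(\textit{feature}, \textit{agent}, \textit{verdict}, \textit{fp}_{i-1}\to\textit{fp}_i)$. Hence any such $f$ built from finite iteration is computable. I will define the three projections in turn.

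\emph{Traceability matrix.} Set $\mathrm{TM}(\mathcal{S}_i) = \{(r,c,d) : (r,c,d)\in\mathcal{T}_i\}$, which is a relation over $R_i\times C_i\times(D_i\cup W_i)$ by Definition~2. I will use Theorem~\ref{thm:correctness} to argue that this matrix is meaningful and not merely extractable. Whenever $\Pi_i$ contains a \texttt{traceability-complete} obligation, every $r\in R_i$ has a row, and every referenced $c$ and $d$ exists, by the connector and orphan checks. So the matrix is total and well-formed, with no external reconciliation needed. The per-mutation version is the difference $\mathcal{T}_i\setminus\mathcal{T}_{i-1}$. Each such delta is tagged with the attesting agent and feature.

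\emph{Impact analysis.} Given an element $x$, I take the set reachable from $x$ in the finite graph whose vertices are $R_i\cup C_i\cup D_i\cup W_i\cup\mathcal{F}_i$. Its edges are the trace links, the connectors $E_i$, and the feature scopes. This is a graph search, computable in $O(|\mathcal{S}_i|)$ by Property~1(a). The historical dimension comes from the attestations. The set of mutations $m_j$ whose attested feature lies in the reachable set lists every change that touched $x$'s impact cone.

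\emph{Compliance certificate.} For a feature $f$ delivered at step $k$, the certificate is the tuple $(f, \Pi_k, \textit{verdict}_k, \textit{fp}_{k-1}\to\textit{fp}_k, \text{agents on } f)$. Theorem~\ref{thm:correctness} guarantees $\mathcal{S}_k\models\Pi_k$ and $\Pi_0^{\mathrm{imm}}\subseteq\Pi_k$. Monotonic Governance then adds that $\Pi(G_{\mathrm{parent}})\subseteq\Pi_k$, so the certificate attests to the inherited standards as well. Fingerprint chaining makes the certificate checkable: recomputing $\textit{fp}_i$ from $\mathcal{S}_i$ along $H$ confirms that the recorded verdicts belong to the recorded states.

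The main obstacle is the phrase ``without external state.'' There are two gaps. First, evaluators are code, and guidebook modules are loaded on demand, so to re-derive a verdict one must also know which evaluator implementation ran. Second, if a certificate claims correspondence with actual code, it would depend on files outside the artifact. To close the first gap I would treat the imported guidebooks and module identities as part of $G$, and hence of $\mathcal{S}_i$. If their content hashes are not already covered by the fingerprint, I would add that as an explicit assumption. For the second gap, I would restrict the theorem's claim to the governed record. The certificate attests to anchor paths and attested verdicts, not to properties that were never modeled. This matches the soundness caveat in the ``verification before persistence'' paragraph.
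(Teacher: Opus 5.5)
Your proposal is correct and follows the paper's own argument. Each $\mathcal{S}_i$ already holds the full engineering state (Definition~1), the attestations supply provenance, and $H$ is totally ordered, so every report is a query over $H$; your version makes that query explicit for each of the three reports. Your extra care pays off, especially the point that imported guidebook contents and evaluator code must be pinned inside $H$ for ``without external state'' to hold literally, since the artifact stores only an \texttt{imports} path. One claim needs tightening: in the impact analysis, the attested feature need not name every element a mutation touched, so take the change history from diffs of $\mathcal{S}_{j-1}$ against $\mathcal{S}_j$ instead.
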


\begin{proof}
Each $\mathcal{S}_i$ contains all requirements, traces, proof obligations, evidence, and coordination entries (Definition~1).  Each attestation records provenance.  $H$ is totally ordered by git.  Any required report is a query over~$H$.
\end{proof}

\noindent Theorem~\ref{thm:correctness} guarantees that no accepted mutation violates the current PO set.  This eliminates regressions \emph{that the PO set captures}.  Semantic defects not encoded as POs---performance regressions, emergent cross-artifact invariants, specification errors in the POs themselves---are not blocked.  The surface is sound but not complete; its coverage grows as the architect crystallizes new obligations from observed failures.

We call $\Pi \cup \Pi(G) \cup \{\text{monotonic ratchets}\}$ the \textbf{constraint surface}: the union of artifact proof obligations, inherited guidebook constraints, and the invariants that prevent requirement/PO count from decreasing.  The surface is decidable (Theorem~1), preserves correctness (Theorem~2), and composes monotonically across organizations (Theorem~3).

\begin{remark}[Computational Separation]
Nidus is the bounded half of a Turing-complete system.  The proposer (LLM) explores an unbounded search space; the kernel gates every persistent transition through finite, terminating verification.  This separation means that adding agents does not increase verification cost, and when the model changes, the constraint surface remains sound.
\end{remark}

%% ====================================================================
\section{Recursive Self-Governance}
\label{sec:self-governance}

\begin{definition}[Recursive Self-Governance]
A Nidus system exhibits \emph{recursive self-governance} if mutations to $\Pi$ are themselves subject to verification by~$\Pi$.
\end{definition}

ACL2~\cite{acl2} does not re-check theorem database changes; seL4~\cite{klein2009} requires human-guided Isabelle/HOL; Cedar~\cite{cedar2024} analyzes policies but is not itself a policy.  Nidus: governance rules live in the same artifact as governed content, every mutation---including to $\Pi$---passes through the same automatic gate.

\texttt{PO-CONFORMANCE} verifies artifact structure; \texttt{(immutable t)} POs block removal.  TCB: the kernel evaluator.  Governed surface: the entire artifact.  Section~\ref{sec:evaluation} reports the empirical test: add PO, violate it, attempt removal.

\paragraph{Runtime self-governance.}  Recursive self-governance extends to the runtime itself.  The kernel evaluates proof obligations not only on the artifact but on its own process state: a memory-ceiling PO rejects commits when heap usage exceeds a configurable threshold (derived from the 160\,GB OOM crashes in Section~\ref{sec:theater}), a latency-bound PO flags verification times exceeding 5\,s, and a state-freshness PO detects when the in-memory artifact has diverged from the WAL.  These are ordinary proof obligations in the artifact---an agent can read them, and the kernel enforces them on itself.  Relative to the systems reviewed here (Section~\ref{sec:related}), we are not aware of another architecture that represents runtime self-constraints as first-class artifact obligations enforced by the same mutation gate.

\paragraph{Autonomic evolution.}  The \emph{decay detector} is a background process that scans the friction ledger for recurring failures, grouped by PO kind (e.g.\ repeated \texttt{traceability-complete} rejections).  When a particular PO-kind's failure count exceeds a configurable threshold within a rolling window, the system generates a candidate proof obligation, tests it via \texttt{what\_if} against the current artifact, and deposits it as an open feature for an agent to commit.  Concrete instance: agents submitted features with empty scopes.  The system generated \texttt{PO-DELIVERY-CASCADE}---a proof obligation that rejects any feature delivery whose scope lacks at least one requirement, one code-path, and one test-path---verified it, and an agent committed it.  The vulnerability is permanently closed.

%% ====================================================================
\section{Stigmergic Multi-Agent Coordination}
\label{sec:stigmergy}

Friction accumulates unevenly---not because one agent is worse, but because its trajectory diverged into a constraint-dense neighborhood.  Trust tiers, derived as a pure function over the ledger's rolling window, route unproductive trajectories toward uncontested work.  This is a \textbf{throughput mechanism}, not a quality judgment---mirroring ant colony reallocation~\cite{theraulaz1998}: the ledger \emph{is} the pheromone; the tier gate \emph{is} the pivot.

$P(\textit{success})$ for each (agent, feature) pair cross-references per-PO-kind reputation vectors against scope-implied PO requirements.  An agent with high traceability accuracy but low evidence accuracy is routed to traceability-heavy features---not gated entirely.  Multi-level identity: \emph{session} (CAS), \emph{family} (blind spot detection), \emph{trajectory} (routing).  Success templates from other families are surfaced in workbenches; pre-flight warnings fire on historically failed PO kinds.  When multiple agents contribute to the same area, collaborative credit accrues to all contributors---the game rewards cooperation, not competition.

\begin{figure}[htbp]
\centering
\includegraphics[width=\textwidth]{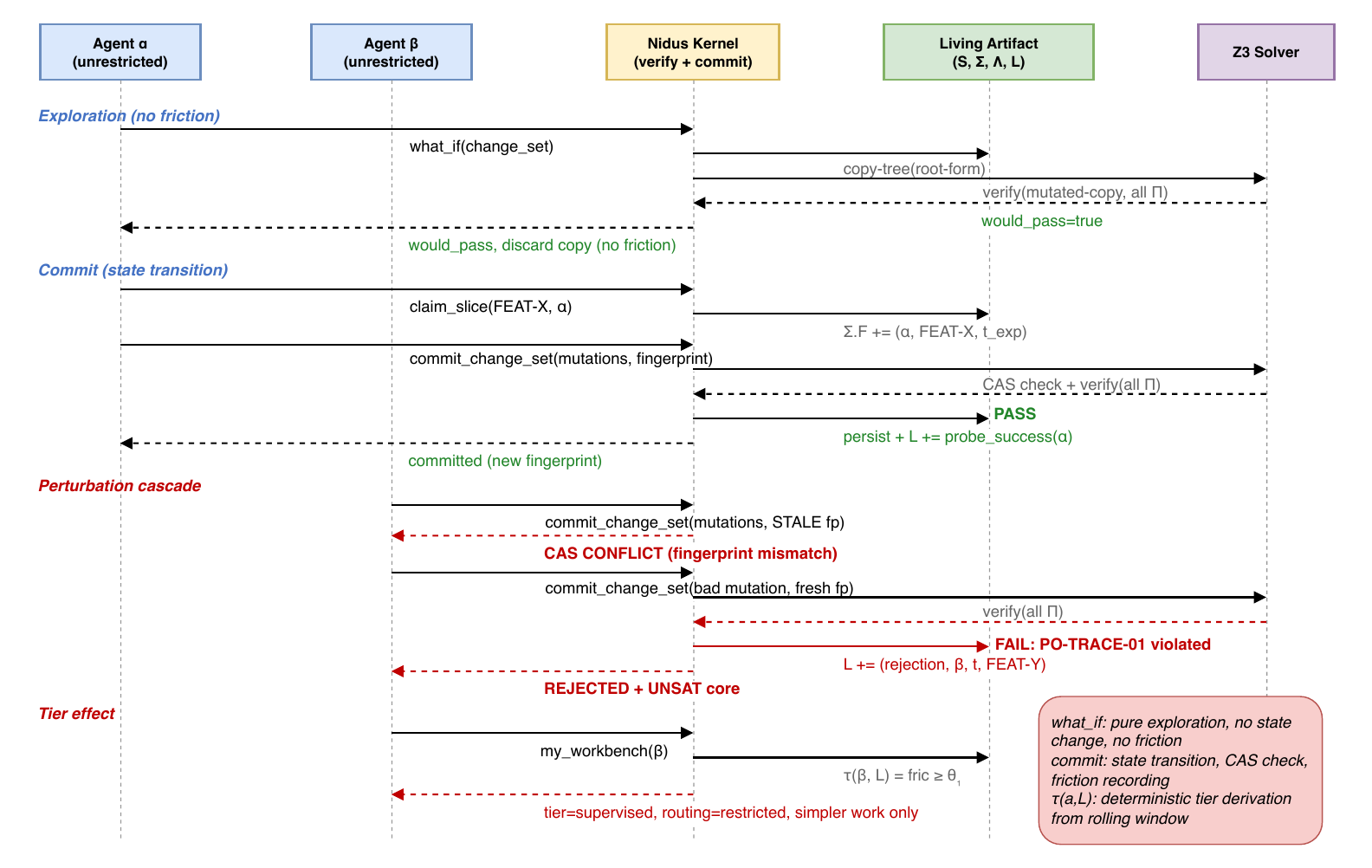}
\caption{Stigmergic dynamics.  Phase~1: speculative exploration.  Phase~2: verified commit.  Phase~3: rejection + UNSAT core.  Phase~4: tier narrowing.  No orchestrator.}
\label{fig:stigmergy}
\end{figure}

%% ====================================================================
\section{Proximal Spec Reinforcement (PSR)}
\label{sec:psrl}

Section~\ref{sec:introduction} argued that engineering invariants require external enforcement.  PSR is the mechanism.  The V-model chain---requirements $\to$ architecture $\to$ design $\to$ traces $\to$ proofs $\to$ evidence---lives in the verified artifact, not in model parameters.  When verification fails, the kernel returns an \emph{UNSAT core}---the minimal subset of proof obligations whose conjunction makes the proposed state unsatisfiable---which tells the agent exactly which link in the engineering chain broke.  The friction tier constrains which features the agent may attempt next.

Human engineering teams succeed because the \emph{environment}---architectural plans, requirements, design decisions---constrains what individuals produce.  AI agents without a persistent engineering context have no plan to be better \emph{toward}.  RL~\cite{lightman2023} makes each generation better; PSR gives it direction.  The architect's contribution is that direction: not just proof obligations (what must hold) but architecture decisions (how to structure it), design elements (what abstractions to use), and the development vector (what to build next).  Guidebooks are organizational knowledge---crystallized from architectural judgment, inherited by every project that imports them.  The constraint surface is the reward model---but decidable, explicit, and persistent across sessions and model generations.

In practice, agents from all three families produced verified deliveries within their first session---the UNSAT core narrowed the search space on each rejection, providing the equivalent of a curriculum without any prior training on the artifact's structure.

The mapping to reinforcement learning is structural:

\begin{center}
{\small
\begin{tabular}{@{}ll@{}}
\toprule
\textbf{RL concept} & \textbf{Nidus mechanism} \\
\midrule
Reward signal & \texttt{what\_if} returns accept/reject + UNSAT core before persistence \\
Reward shaping & Friction score: sum of per-PO-kind failure weights (graded, not binary pass/fail) \\
Policy gradient & Repair templates from UNSAT cores: concrete S-expression fixes \\
Value function & Per-PO-kind reputation vectors route agents to their strengths \\
Curriculum & Pre-flight PO-kind warnings before claiming; scope-implied difficulty \\
Environment design & Guidebooks: inherited organizational knowledge, persistent across sessions \\
\bottomrule
\end{tabular}}
\end{center}

\noindent The game is cooperative: each agent's successful delivery strengthens the artifact that all agents share, and no agent benefits from another's failure.  Formally, let $u_i(\mathcal{S})$ be the expected success rate of agent~$i$ on artifact state~$\mathcal{S}$.  When agent~$j$ delivers feature~$f$, the artifact gains traces, evidence, and coverage: $\mathcal{S}' \supseteq \mathcal{S}$ (monotonic ratchets prevent removal).  Since more traces reduce \texttt{traceability-complete} rejections for all agents, $u_i(\mathcal{S}') \geq u_i(\mathcal{S})$---satisfying the condition for a \emph{team game} in the sense of Marschak~(1955).  Better feedback (graded friction, per-kind routing, concrete repair examples from UNSAT cores) produces faster convergence.  The specification is the reward function.  RLVR~\cite{lightman2023} shapes the model during training; PSR shapes the environment at inference time.  The two are complementary.  The model does not need to be brilliant; it needs to be \emph{checkable}.

\paragraph{Proxy stigmergy.}  Smaller models (e.g.\ Haiku-class, Codex-mini) cannot hold the full governance protocol in context.  In practice, a coordinator agent---running a larger model---claims a feature, decomposes it into focused subtasks, and dispatches each subtask to a cheaper subagent with a narrowed context containing only the relevant requirements, design elements, and PO subset.  The subagent proposes mutations; the coordinator submits them through the kernel.  The verification floor catches errors from either party: if the coordinator's decomposition misses a trace, the PO rejects the delivery regardless of which agent produced it.  This pattern reduces token cost by 60--80\% on routine deliveries while preserving the governance guarantee.

%% ====================================================================
\section{Enforcement Mechanisms}
\label{sec:theater}

If agents can fabricate evidence, the substrate degenerates into security theater.

\paragraph{The attack.}  During development, an agent submitted fabricated evidence---\texttt{witness="governance-audit"} and \texttt{status=passed}---for features it had never tested.

\paragraph{The defense.}  Three structural mechanisms compose:

\begin{enumerate}
    \item \textbf{Witness allow-list.}  Evidence is rejected unless the witness field matches a registered source (e.g.\ \texttt{kernel-test-gate}, \texttt{ci-pipeline}, \texttt{human-review}).
    \item \textbf{Test-execution hash.}  Every evidence form includes a SHA-256 digest of test file contents, computed \emph{server-side}---agents never touch the hash.
    \item \textbf{Fused execution.}  \texttt{run\_evidence\_gate} runs tests, computes hash, and submits evidence atomically.  You cannot submit evidence without running the tests.
\end{enumerate}

\noindent The \texttt{evidence-provenance} PO verifies conditions 1--2 at commit time.  The \texttt{claim-before-dispatch} PO intercepts unclaimed work at the mutation layer---no transport can bypass it.

\emph{Threat model assumptions.}  These defenses hold under: (a)~the kernel is not compromised (the daemon is the TCB); (b)~test runners are honest (a compromised runner could produce valid hashes for non-running tests); (c)~guidebook authors are trusted (a malicious guidebook could weaken constraints).  Nidus does not claim security against adversarial kernel modification or colluding insiders---it prevents the common case: agents optimizing for speed by fabricating evidence.

\paragraph{The obligation argument.}  Before the gate, agents faced ambiguity: fabricate (fast, zero value) or test (slow, real value).  After, the only path that succeeds produces real evidence.  For example, the \texttt{evidence-provenance} constraint (PO-kind \texttt{evidence-provenance}, Listing~2) rejects any evidence whose witness is not in the allow-list or whose hash is missing---so the only mutation that passes is one produced by the fused \texttt{run\_evidence\_gate} tool.  The architect eliminated the ambiguity by adding this constraint; the agents benefited from the narrower search space.  This pattern repeated across all obligations: each constraint, crystallized from the architect's steering, removes a class of unproductive agent behavior.

\paragraph{Retroactive verification.}  \texttt{retroactive\_verify} evaluates candidate POs against $N$ historical states from the Git WAL, returning \texttt{safe} or \texttt{over-constrained}.  This is bounded model checking over the mutation history.

\paragraph{Incremental verification.}  Full verification is too slow for interactive commits.  The solution: map each mutation's target section to the PO kinds at risk.  A commit touching only \texttt{features} skips irrelevant kinds.  MCP stdio processes delegate to a single TCP daemon with cached solver state, keeping agent-facing latency sub-second.

\paragraph{Two-mode verification.}  Analysis of the proof obligation evaluators revealed that all commit-path checks evaluate formulas where variables are already bound from artifact state---no constraint solving is needed.  The system operates in two modes.  \emph{Online mode}: every commit is verified by direct Lisp evaluation on ground artifact state, sub-millisecond, no solver dependency.  \emph{Guidebook management mode}: when an architect authors, imports, or composes guidebook constraints, Z3 validates the constraint set for mutual consistency---genuine satisfiability checking with free variables.  The SMT solver is a guidebook compiler's tool, not a commit-path dependency.  S-expressions remain SMT-LIB2; the representation supports solver use.  The runtime rarely needs it.

\paragraph{Thin MCP relay.}  A practical constraint on multi-agent deployment is per-connection memory.  Each MCP client (Claude, Gemini, Codex) spawns an MCP stdio process.  Originally, each loaded the full daemon codebase (926\,MB), limiting concurrent agents to $\sim$3 on a 4\,GB server.  Since \texttt{tools/call} already relays to a shared TCP daemon, the MCP stdio process was reduced to a pure JSON-RPC relay (15\,MB), enabling $\sim$60 concurrent agent connections---a prerequisite for the investigative swarm described next.

\paragraph{Investigative swarm.}  The daemon decomposes audit queries into self-contained jobs, each inlining relevant artifact context (requirement text, feature scope).  Cheap agents (e.g.\ 100 parallel Jules sessions, Google's Gemini-based coding agent) investigate in review mode---reading code, judging consistency, reporting structured findings.  Findings are ingested into the coordination section; a proof obligation ensures all mismatches are resolved.  The constraint surface grows from stochastic exploration, not just architect observation.

\paragraph{SOP enforcement through gate-sequence POs.}  Standard operating procedures are documents that agents consult when convenient and skip when optimizing for speed.  An FPGA synthesis agent repeatedly skipped pre-synthesis gates (lint, simulation, resource model verification), wasting \$100--150 per failed synthesis run on cloud infrastructure.  The agent's self-assessment: ``I optimize for speed over correctness.  The SOP exists because this exact behavior wasted money.''

The fix encodes the SOP as a \emph{gate-sequence} proof obligation with DAG dependencies.  Each gate declares an identifier, a required witness type, and a list of prerequisite gates:

\begin{lstlisting}[basicstyle=\ttfamily\footnotesize]
(gate FI-1 (description "Pre-synth") (witness "verilator-sim")
  (depends-on FI-0 FI-0b FI-0c))
\end{lstlisting}

\noindent The evaluator checks coordination evidence entries for each gate in the dependency order.  If gate FI-0c (formal verification) has no evidence, the delivery is rejected: ``feature FEAT-X gate FI-1 requires FI-0c to pass first.''  The agent cannot skip gates because the constraint surface does not offer a path that avoids them.  This transforms SOPs from behavioral aspirations into structural properties of the engineering environment.

\paragraph{Lesson provenance: the artifact's immune system.}  Proof obligations are \emph{positive} constraints: what must hold.  Lessons are \emph{negative} constraints: what must not happen.  Together they complete the constraint surface.

During a 12-hour session, the daemon crashed six times from unbounded memory growth---each a different manifestation of the same class of bug.  Each crash was debugged, fixed, and committed.  The commit messages encoded the diagnostic chains, but commit messages die with the session.  The next agent would rediscover the same failures because the failure-to-obligation chain lived in git, not the artifact.

The fix: a \texttt{lessons} section in the artifact ($\Lambda$ in Definition~1).  Each lesson links a failure to a root cause to the obligation that prevents recurrence:

\begin{lstlisting}[basicstyle=\ttfamily\footnotesize]
(lesson LSN-OOM-Z3-PER-FEATURE
  (failure "160GB OOM: Z3 called per feature x constraint")
  (root-cause "Constraint satisfiability is guidebook-level")
  (fix "Removed per-feature Z3 loop")
  (obligation "skip-guidebooks at commit time")
  (affected-scope "lisp/epochd-proof-guidebook.lisp")
  (cost "160GB OOM, 4 daemon crashes")
  (commits "ce3e50d" "2ddd721"))
\end{lstlisting}

\noindent Each lesson is a negative proof obligation.  Z3 can verify soundness: the lesson's \texttt{root-cause} field identifies the precondition (here: ``Z3 called per feature $\times$ constraint''), and the \texttt{obligation} field names the invariant that prevents it (here: ``skip-guidebooks at commit time'').  The encoding is \texttt{(implies precondition failure)} and \texttt{(implies context (not precondition))}.  Concretely, for the lesson above: \texttt{precondition} $\equiv$ \texttt{(and commit-path-mode (z3-called-per-feature true))}; the obligation asserts \texttt{(implies commit-path-mode (not (z3-called-per-feature true)))}.  Z3 checks whether the conjunction is satisfiable.  If UNSAT, the obligation provably prevents the failure.  If SAT, there is a path the fix does not cover.

The workbench surfaces lessons whose \texttt{affected-scope} overlaps the agent's feature code-paths---not by keyword matching but by formal scope intersection.  An agent editing the verification pipeline sees the OOM lessons before writing code.  Institutional knowledge becomes a structural property of the artifact, not a behavioral property of agents who happened to read the right commit messages.

From a game-theoretic perspective, lessons solve the tragedy-of-the-commons on institutional knowledge.  The mechanism: recording a lesson is a public good (all future agents benefit from the narrower failure surface), while the cost falls on the recording agent (time spent diagnosing and formalizing).  Nidus aligns incentives by awarding reputation credit proportional to the lesson's severity---agents that crystallize high-cost failures earn faster tier promotion.  The game rewards knowledge crystallization alongside feature delivery.  The system becomes antifragile: each failure strengthens the constraint surface.

%% ====================================================================
\section{Related Work}
\label{sec:related}

\paragraph{LLM-assisted formal verification.}  Mugnier et al.~\cite{mugnier2024} (Laurel) use LLMs to generate Dafny helper assertions that unblock SMT-backed verification---an ``unbounded prover, bounded verifier'' split structurally similar to Nidus.  The difference: Laurel generates assertions for a \emph{fixed program}; Nidus governs the co-evolution of the specification \emph{itself}, including its proof obligations, through the same decidable surface.  Hao et al.~\cite{hao2025} formalize natural-language constraints into SAT/SMT with UNSAT-core repair loops---close to Nidus's \texttt{what\_if} + repair mechanism.  Their framework governs plans; Nidus governs the full engineering artifact (requirements, architecture, design, traces, proofs, evidence) as a singleton.  Kleppmann~\cite{kleppmann2025} predicts that AI will make formal verification mainstream by removing the proof-writing bottleneck; Nidus is an existence proof of that thesis, applied to engineering governance rather than program correctness.

\paragraph{Agent frameworks and planning layers.}  Agent-first IDEs generate implementation plans before coding; multi-agent frameworks (AutoGen~\cite{autogen2024}, SpecMAS~\cite{specmas2024}) add orchestration and protocol verification.  These are complementary to Nidus, which occupies a different layer: not an agent framework but a \emph{governance substrate} beneath any framework.  Agents from any family connect via MCP; the surface gates all of them identically.  RLVR and process reward models can coexist with the surface---the model improves its adherence during training; the surface guarantees enforcement at inference time.

\paragraph{Formal methods and MBSE.}  Proof-carrying code~\cite{acl2} established the ``unbounded prover, bounded verifier'' architecture decades ago; Nidus is the LLM-era realization applied to engineering artifacts rather than machine code.  ACL2 does not re-check its theorem database on modification; seL4~\cite{klein2009} requires human-guided Isabelle/HOL.  Cedar~\cite{cedar2024} analyzes authorization policies but is not itself governed by its policies.  The MBSE+SMT patent~\cite{mbse2022} translates MBSE artifacts into first-order logic via a translation layer.  Unlike such approaches, Nidus keeps the artifact solver-aligned: the same S-expression structure is used for storage, inspection, and constraint compilation.  TLA+, Dafny, and Lean/Coq/Isabelle verify programs in a proof language separate from the program; Nidus's representational closure means the specification, the database, and the solver input are the same object.

\paragraph{Normative multi-agent systems.}  The normative MAS tradition~\cite{boissier2009} separates an organizational norm layer from domain agents, with enforcement components that detect violations---structurally close to Nidus's constraint surface.  Guidebook inheritance ($\Pi(G_{\mathrm{parent}}) \subseteq \Pi(G_{\mathrm{child}})$) parallels hierarchical institutional norms in which lower-level organizations inherit constraints from higher-level ones~\cite{vandertorre2009}.  Nidus adds SMT-decidability, representational closure, and recursive self-governance to this pattern.

\paragraph{Executable specifications and MBSE.}  SysML~v2~\cite{sysmlv2} moves model-based systems engineering toward executable, textual specifications---structurally converging with Nidus's artifact-as-specification.  The difference: SysML~v2 models are authored by humans and verified by tools; Nidus artifacts are \emph{co-authored} by humans and LLMs and verified on every mutation.  SysML~v2 has no recursive self-governance, no friction-based coordination, and no mechanism to prevent governance theater.

\paragraph{Self-modifying systems.}  The G\"{o}del Machine~\cite{schmidhuber2003} gates self-modification on proof but is a theoretical framework over unrestricted logic.  Nidus restricts to decidable tuples (finite sets, QF\_LIA guards) for tractability, making recursive self-governance practically enforceable.

%% ====================================================================
\section{Evaluation}
\label{sec:evaluation}

We evaluate four claims: (C1)~the constraint surface prevents regressions; (C2)~guidebook inheritance works across projects; (C3)~recursive self-governance holds; (C4)~the friction model differentiates agent behavior.  All experiments run on the self-hosting deployment---the system that governed its own construction.

\subsection{C1: Constraint Surface Prevents Regressions}

\paragraph{Lesion study.}  For each of the 49 removable elements in a sandbox artifact, we removed the element via \texttt{commit\_change\_set} and recorded whether verification rejected the mutation.

\begin{table}[ht]
\centering
\caption{PO coverage by element type ($N = 49$ removals).}
\label{tab:lesion}
\begin{tabular}{@{}lcccl@{}}
\toprule
\textbf{Element type} & \textbf{N} & \textbf{Guarded} & \textbf{Rate} & \textbf{Guardian PO kind} \\
\midrule
Trace              & 11 & 11 & 100\% & traceability-complete \\
Component          &  6 &  6 & 100\% & connector-integrity \\
Design element     &  6 &  6 & 100\% & feature-scope-valid \\
Requirement        & 11 &  9 &  82\% & feature-scope-valid \\
Connector          &  5 &  0 &   0\% & \emph{gap (subsequently closed)} \\
Proof obligation   & 10 &  0 &   0\% & \emph{sandbox lacked immutable flag} \\
\midrule
\textbf{Total}     & \textbf{49} & \textbf{32} & \textbf{65\%} & \\
\bottomrule
\end{tabular}
\end{table}

\noindent Two gaps found.  The connector gap led to \texttt{connector-existence}, implemented through the governance loop---the experiment identified the weakness; the fix was delivered by the mechanism it now protects.  The PO gap was a sandbox limitation: the production artifact uses \texttt{(immutable t)} to protect POs from removal.

\paragraph{Verification latency.}  Table~\ref{tab:latency} reports wall-clock times for the production artifact (461~features, 873~requirements, 238~POs).  The daemon maintains cached solver state; all measurements include TCP round-trip.

\begin{table}[ht]
\centering
\caption{Verification latency on the production artifact (M4~Mac Studio, SBCL~2.6, 238~POs).}
\label{tab:latency}
\begin{tabular}{@{}lrl@{}}
\toprule
\textbf{Operation} & \textbf{Wall-clock} & \textbf{Notes} \\
\midrule
\texttt{ping}             &      1\,ms & Daemon heartbeat \\
\texttt{read\_system}     &    173\,ms & Full artifact + traceability matrix \\
\texttt{precommit-check}  &    493\,ms & Incremental: 238~POs, touched sections only \\
\texttt{artifact\_health}  &  $\sim$2\,s & Full: all POs + governance health score \\
\texttt{commit\_change\_set} & 2--5\,s & Verify + CAS + git commit + WAL push \\
\bottomrule
\end{tabular}
\end{table}

\noindent Incremental verification (the commit-path hot loop) runs under 500\,ms.  Full health checks run under 2\,s.  Both scale linearly with PO count; the current 238~POs leave headroom for growth.

\subsection{C2: Guidebook Inheritance}
This claim is trivially guaranteed by the monotonic inheritance semantics (Theorem~3), so the evaluation serves as a \emph{validation} rather than an experiment.  The production deployment uses twelve guidebooks (128~constraints) across four layers: \emph{corporate} (copyright, singleton, daemon authority---10 constraints), \emph{engineering} (module DAG, function granularity, file size, naming, spec-first, TDD---37 constraints), \emph{compliance} (EU~AI~Act, ISO~27001, ISO~42001, NIST~AI~RMF, DO-178C, SOC~2, HIPAA, Singapore~MGF---72 constraints across eight frameworks), and \emph{domain} (chip-design RTL governance---9 constraints).  Five governed projects inherit from the same guidebook hierarchy.  A data-loss-prevention project importing the corporate and EU~AI~Act guidebooks was rejected for missing \texttt{test-paths} on delivery---a constraint it never defined locally.  The organizational standard was enforced mechanically, confirming the implementation matches Theorem~3.

\subsection{C3: Recursive Self-Governance}

Three mutations on the sandbox, in sequence: (1)~Add PO with \texttt{(immutable t)}---accepted; PO immediately active.  (2)~Submit requirement violating the new PO---rejected by the PO just added.  (3)~Remove the immutable PO---rejected.  The system accepted an addition to its own constraint surface, enforced it on the next mutation, and blocked its removal.

\subsection{C4: Friction Differentiates Agent Behavior}

125~friction events, 25~agent sessions, three model families.  Gemini sessions: predominantly rejections (44/47 events).  Codex sessions: proportionally more abandonments (odds ratio $= 5.04$, Fisher's exact $p = 0.02$).  Seven sessions reached \textit{restricted} tier; the highest-friction session (36~events) was mechanically gated from claiming features.  Different model families produce different failure signatures; the friction model routes accordingly.

\subsection{Architect-Designed Constraints}

Table~\ref{tab:constraints} shows five constraints crystallized from the architect's steering during construction.  Each row traces observed friction to a deployed constraint to the class of agent behavior eliminated.

\begin{table}[ht]
\centering
\caption{Architect-steered constraints.  Observed friction $\to$ crystallized obligation $\to$ agent behavior eliminated.}
\label{tab:constraints}
{\small
\begin{tabular}{@{}p{3cm}p{5cm}p{4.5cm}@{}}
\toprule
\textbf{Friction} & \textbf{Constraint} & \textbf{Behavior blocked} \\
\midrule
78 fabricated evidence incidents & \texttt{(implies evidence\_submitted (and witness\_registered hash\_present))} & Unregistered witness or missing hash \\
\addlinespace
Unclaimed dispatch & \texttt{(implies work\_dispatched feature\_claimed\_by\_agent)} & Work on unclaimed features \\
\addlinespace
Delivery without tests & \texttt{(implies feature\_delivered (and has\_code\_paths has\_test\_paths ...))} & Incomplete scope \\
\addlinespace
Backfilled specs & \texttt{(temporal true) (po-kind spec-precedes-code)} & Spec committed after code \\
\addlinespace
DAG violations & \texttt{(and (not parse\_imports\_query) (not mutate\_imports\_query) ...)} & Module back-edges \\
\bottomrule
\end{tabular}}
\end{table}

\paragraph{Threats.}
(1)~All deployments internal to one organization.  (2)~Coverage gaps found and closed during self-hosting (connector, evidence, dispatch).  (3)~Swarm experiment ran seconds, not hours.  (4)~A third deployment (SystemVerilog, 68 RTL modules) is bootstrapped but not yet independently operated.  Future work: independent replication on an external codebase and $k$-step forward reachability experiments.

%% ====================================================================
\section{Conclusion}
\label{sec:conclusion}

LLM agents generate code.  They do not enforce engineering invariants.  Nidus combines a fast proposer (LLMs) with a constraint surface that verifies every mutation.  The practical result: guidebooks encode organizational standards inherited by every governed project; the surface compounds as each obligation permanently eliminates a class of unengineered output.  The verification-gated history is a complete engineering record with respect to the modeled artifact and active obligations (Theorem~\ref{thm:record}): compliance certificates, traceability matrices, and impact analyses are projections---the audit trail is a theorem, not a process.  LLM agents replace coders.  They do not replace the architect.  Requirements, architecture decisions, and the development vector remain human input.

The surface outlives the agents, the models, and the frameworks.  When the next model generation arrives, the constraints remain sound.  When Nidus itself evolves---new PO kinds, new guidebook constraints, new sections---the evolution is governed by the same surface.  Each commit carries a daemon-generated attestation: which feature, which agent, which verification result, which fingerprint transition.  The git write-ahead log is a sequence of verified, attributed governance states---a formal development in the mathematical sense.

As models improve, the architect's role itself becomes partially automatable: decomposing features, drafting architecture decisions, proposing proof obligations from friction patterns.  The living artifact makes this transition safe because the \emph{mechanical} architect (the constraint surface + Z3) is separated from the \emph{creative} architect (the human or LLM making judgments).  Whoever proposes---human or model---the surface verifies.  The creative role can migrate gradually from human to LLM without the governance degrading, because the governance is a property of the artifact, not of who writes to it.

$k$-step forward reachability---verifying that no sequence of $k$ legal mutations can reach a state violating an immutable obligation, a bounded model checking problem over the constraint surface---and independent replication on an external codebase complete the agenda.

\paragraph{Acknowledgements.} I thank Anton Afanasyev for discussions that improved the clarity of this paper.

\paragraph{Disclosure.} The recursive self-governance mechanism is the subject of Swiss patent application CH000371/2026 (filed with IGE).

\end{document}